\def\ps@pprintTitle{%
 \let\@oddhead\@empty
 \let\@evenhead\@empty
 \def\@oddfoot{}%
 \let\@evenfoot\@oddfoot}
\journal{Operations Research Letters}
\DeclareMathOperator*{\st}{s.t.}
\newtheorem{lemma}{Lemma}[section]
\newtheorem{prop}{Proposition}[section]
\newtheorem{thm}{Theorem}[section]
\newtheorem{rem}{Remark}[section]
\newtheorem{cor}{Corollary}[section]
\newtheorem{defi}{Definition}[section]
\begin{document}

\begin{frontmatter}

\title{Multi-Period Portfolio Optimization: \\ {\Large Translation of Autocorrelation Risk to Excess Variance}}

\author[umich]{Byung-Geun Choi}
\author[epfl]{Napat Rujeerapaiboon}
\author[umich]{Ruiwei Jiang}

\address[umich]{Department of Industrial \& Operations Engineering, University of Michigan}
\address[epfl]{Risk Analytics and Optimization Chair, $\acute{\text{E}}$cole Polytechnique F$\acute{\text{e}}$d$\acute{\text{e}}$rale de Lausanne, Switzerland}

\begin{abstract}
Growth-optimal portfolios are guaranteed to accumulate higher wealth than any other investment strategy in the long run. However, they tend to be risky in the short term. For serially uncorrelated markets, similar portfolios with more robust guarantees have been recently proposed. This paper extends these robust portfolios by accommodating non-zero autocorrelations that may reflect investors' beliefs about market movements. Moreover, we prove that the risk incurred by such autocorrelations can be absorbed by modifying the covariance matrix of asset returns.
\end{abstract}

\begin{keyword}
portfolio optimization\sep semidefinite programming \sep second-order cone programming \sep robust optimization
\end{keyword}

\end{frontmatter}

\linenumbers

\section{Introduction}
\label{sec:intro}
In this paper, we consider a dynamic portfolio optimization problem where investors face a challenge of how to allocate their wealth over a set of available assets to maximize their terminal wealth. By optimizing expected log-utility over a single investment period, the obtained portfolio, referred to as the growth-optimal portfolio, is shown to be optimal with respect to several interesting objectives in a classical stochastic setting. For example, \cite{kelly56} and \cite{breiman61} independently demonstrate that the growth-optimal portfolio will eventually accumulate more wealth than any other causal investment strategy with probability 1 in the long run. Moreover, it also minimizes the expected time required to reach a specified wealth target when the target is asymptotically large; see, e.g., \cite{breiman61,Algoet88}. For the readers interested in the history and the properties of the growth-optimal portfolio, we refer to \cite{Christensen05, MacLean10,Poundstone05}. Nonetheless, despite its theoretical appeals, there are many reasons why the practical relevance of the growth-optimal portfolio remains limited.

First, empirically the growth-optimal portfolio is shown to be highly volatile in the short run. Moreover, the calculation of the growth-optimal portfolio requires full and precise knowledge of the asset return distribution. In practice, this distribution is not available and has to be estimated from sparse empirical data. Therefore, the growth-optimal portfolio is prone to statistical estimation errors. \cite{Rujeerapaiboon16} extends the growth-optimal portfolio to more practical settings by proposing a fixed-mix investment strategy that offers a similar performance guarantee as the classical growth-optimal portfolio but for a finite investment horizon. Moreover, the proposed performance guarantee is not distribution-specific but remains valid for any asset return distribution within the prescribed ambiguity set of serially uncorrelated distributions.

Our contribution in this paper is to extend the results in \cite{Rujeerapaiboon16} to the case with non-zero autocorrelations (also known as, serial correlations). These autocorrelations can be used to incorporate beliefs of the investors about market movements as well as seasonality in asset returns; see e.g. \cite{Tinic84,Lo90}. Moreover, we prove that these autocorrelations can be absorbed in the covariance matrix underlying the asset return distribution. Finally, we remark that all of the discussed \emph{dynamic} investment strategies, namely the classical growth-optimal portfolio, the robust growth-optimal portfolio (see \cite{Rujeerapaiboon16}), and the extended robust growth-optimal portfolio (proposed in this paper), share similar computational advantage, in the sense that, all of them can be obtained with relative ease by solving \emph{static} optimization problems.

{\color{black}The rest of the paper is structured as follows. In Section~\ref{sec:dist} we explain how we model the distributional ambiguity in financial markets, and in Section~\ref{sec:riskmeasure} we define the risk measure, namely worst-case growth rate, to assess the performance of each individual portfolio. An analytical formula for the worst-case growth rate is then derived in Section~\ref{sec:wcgrowthrate}. A couple of numerical experiments are also given in this section. Finally, we provide an approximate worst-case growth rate for a more general probabilistic setting in Section~\ref{sec:extension}, and Section~\ref{sec:conclusion} concludes.
}

\textbf{Notation.} We denote the space of symmetric matrices in $\mathbb{R}^{n \times n}$ by $\mathbb S^n$. For any symmetric matrices $\mathbf X$ and $\mathbf Y$ with the same dimension, we denote their trace scalar product by $\left< \mathbf X, \mathbf Y \right>$. Moreover, for a positive semidefinite $\mathbf X \in \mathbb S^n$, we define $\mathbf X^{1/2}$ as its principle square root. We also define $\bm 1$ as a column vector of ones and $\mathbb I$ as an identity matrix. Their dimensions should be clear from the surrounding context. Random variables are represented by symbols with tildes. 
We denote by $\mathcal P_0^n$ the set of all probability distributions $\mathbb P$ on $\mathbb R^n$, and we represent by $\mathbb E_{\mathbb P}(\cdot)$ and $\mathbb{COV}_{\mathbb P}(\cdot,\cdot)$ the expectation and the covariance of the input random parameter(s) with respect to the probability distribution $\mathbb P$. Throughout the paper, we assume that the investment horizon is given by $\mathcal T = \lbrace 1, 2, \hdots, T \rbrace$ and the asset universe is given by $\mathcal N = \lbrace 1, 2, \hdots, N \rbrace$. Moreover, for a complex number $c$, we denote its real part by $\text{Re}(c)$. Finally, we define $(t)_T$ as the residue of $t$ modulo $T$. Note that for any $t \in \mathbb Z$, $(t)_T$ takes a value from $\lbrace 0, \hdots, T-1 \rbrace$.

\section{Distributional setting: stationary means, variances, and autocorrelations} \label{sec:dist}
In this section, we describe the setting of the probability distributions $\mathbb{P}$ of the asset returns $[\tilde{\bm r}_t]_{t=1}^T$. We define the ambiguity set
\begin{equation*}
\begin{aligned}
	\mathcal{P} = 
	\left \lbrace 
		\mathbb{P} \in \mathcal{P}_0^{NT}~: \begin{array}{l}
			\mathbb{E}_\mathbb{P} \left( \tilde{r}_{t,i} \right) = \mu_i
				\quad \forall i \in \mathcal{I} 
				\quad \forall t \in \mathcal{T} \\
			\mathbb{COV}_\mathbb{P} \left( \tilde{r}_{s,i}, \tilde{r}_{t,j} \right) = 
				\rho_{(t-s)_T} \sigma_{i,j}
				\quad \forall i,j \in \mathcal{I} 
				\quad \forall s,t \in \mathcal{T}
		\end{array}
	\right \rbrace,
\end{aligned}
\end{equation*}
where $\bm\mu = [\mu_i]_{i \in \mathcal{N}} \in \mathbb{R}^N$ stands for the vector of expected asset returns and $\bm\Sigma = [\sigma_{i,j}]_{i,j \in \mathcal{N}} \in \mathbb{S}^{N}$ stands for the covariance matrix of asset returns. Both $\bm\mu$ and $\bm\Sigma$ are assumed to be stationary, i.e., they remain unchanged over time. This choice of ambiguity set $\mathcal{P}$ is nicely motivated in \cite{Roy52}. In contrast to \cite{Rujeerapaiboon16} which assumes uncorrelatedness between asset returns at different trading periods $s$ and $t$, we allow them to be correlated with correlation $\rho_{(t-s)_T}$. 

Next, we consider a fixed-mix strategy (see \cite{Rujeerapaiboon16}) generated by a portfolio $\bm w \in \mathcal{W} \subset \mathbb{R}^N_+$, where $\mathcal{W}$ is a convex polyhedral of allowable portfolios. That is, we assume that portfolio weights revert to $\bm{w}$ at every rebalancing date $t \in \mathcal{T}$. 
In order to avoid clutter, we denote the portfolio return during the trading period $t$ by $\tilde\eta_t = \bm{w}^\intercal\bm{\tilde r}_t$. It is clear that, under the distribution $\mathbb{P}$ of asset returns, we have that
\begin{equation*}
\begin{aligned}
	\mathbb{E}_\mathbb{P} (\tilde \eta_t) = \bm w^\intercal \bm\mu 
	\quad \text{and} \quad
	\mathbb{COV}_\mathbb{P} (\tilde \eta_s, \tilde \eta_t) = \rho_{(t-s)_T} \bm w^\intercal \bm\Sigma \bm w
	\quad \forall s,t \in \mathcal{T}.
\end{aligned}
\end{equation*}
Equivalently put, the mapping $\tilde\eta_t = \bm{w}^\intercal\bm{\tilde r}_t$ projects $\mathcal{P}$ to an ambiguity set $\mathcal{P}(\bm{w})$ defined as
\begin{equation*}
\begin{aligned}
	\mathcal{P}(\bm w) = 
	\left \lbrace 
		\mathbb{P} \in \mathcal{P}_0^{T}~: \begin{array}{l}
			\mathbb{E}_\mathbb{P} (\tilde \eta_t) = \bm w^\intercal \bm\mu
				\quad \forall t \in \mathcal{T} \\
			\mathbb{COV}_\mathbb{P} (\tilde \eta_s, \tilde \eta_t) = \rho_{(t-s)_T} \bm w^\intercal \bm\Sigma \bm w
				\quad \forall s,t \in \mathcal{T}
		\end{array}
	\right \rbrace.
\end{aligned}
\end{equation*}
The projection property of the ambiguity set (see, for example, \cite[Theorem~1]{Yu09} and \cite[Proposition~2]{Rujeerapaiboon16}) further asserts the existence of the inverse mapping from $\mathcal P (\bm w)$ to $\mathcal P$. Hence, from the next section onward, we will study the performance of an arbitrary portfolio $\bm{w}$ using the projected ambiguity set $\mathcal{P}(\bm w)$ instead of the original ambiguity set $\mathcal{P}$, since maneuvering $\mathcal{P}(\bm w)$ often leads to an optimization problem with a smaller dimension.

\section{Portfolio performance measure: worst-case growth rate} \label{sec:riskmeasure}
In a friction-less  market, a fixed portfolio $\bm w$ repeatedly invested over the investment horizon $\mathcal{T}$ leads to a total return of
\begin{equation*}
	\prod_{t \in \mathcal T} ( 1 + \bm{w}^\intercal \tilde{\bm{r}}_t) = \text{exp} \left( \sum_{t \in \mathcal T} \log ( 1 + \bm{w}^\intercal \tilde{\bm{r}}_t) \right),
\end{equation*}
which is a random amount. An intuitive performance measure for this portfolio would thus be an expectation of its logarithmic terminal wealth $\mathbb{E} \left( \sum_{t \in \mathcal T}\left( \log ( 1 + \bm{w}^\intercal \tilde{\bm{r}}_t) \right) \right)$. 
A portfolio that maximizes such utility function is referred to as a growth-optimal portfolio. This portfolio exhibits many intriguing asymptotic properties, and some of them are discussed in Section \ref{sec:intro}. Moreover, if the asset return distribution is serially independent and identically distributed, then the growth-optimal portfolio can be obtained by solving a static optimization problem $\max_{\bm w \in \mathcal W}\mathbb{E} \left( \log ( 1 + \bm{w}^\intercal \tilde{\bm{r}}_1) \right)$. However for a finite $T$ (especially when $T$ is small), the expectation criterion becomes risky (as the central limit theorem fails) and accordingly \cite{Rujeerapaiboon16} proposes to use a quantile criterion instead of the expectation criterion. Precisely speaking, \cite{Rujeerapaiboon16} employs recent advances in distributionally robust optimization (see \cite{Zymler10}) to determine the \emph{worst-case growth rate} by solving the following optimization problem\footnote{Note that $\bm{w}^\intercal\tilde{\bm{r}}_t - \frac{1}{2}\left(\bm{w}^\intercal \tilde{\bm{r}}_t \right)^2$ is a second-order Taylor approximation of $\log(1 + \bm{w}^\intercal \tilde{\bm{r}}_t)$ around $\bm{w}^\intercal \tilde{\bm{r}}_t = 0$. The approximation becomes more accurate as the rebalancing frequency increases.}
\begin{equation*}
\begin{aligned}
	\mathcal{G}_{\epsilon}(\bm{w}) 
	&= \max_\gamma \left\lbrace \gamma :
		\mathbb{P} \left( \frac{1}{T} \sum_{t \in \mathcal T} \left( \bm{w}^\intercal \tilde{\bm{r}}_t - \frac{1}{2}\left(\bm{w}^\intercal \tilde{\bm{r}}_t \right)^2 \right) \geq \gamma \right) \geq 1 - \epsilon \quad \forall \mathbb{P} \in \mathcal{P}
	\right\rbrace \\
	&= \max_\gamma \left\lbrace \gamma :
		\mathbb{P} \left(  \frac{1}{T} \sum_{t \in \mathcal T} \left( \tilde\eta_t - \frac{1}{2}\tilde\eta_t^2 \right) \geq \gamma \right) \geq 1 - \epsilon \quad \forall \mathbb{P} \in \mathcal{P}(\bm w)
	\right\rbrace.
\end{aligned}	 
\end{equation*}
where the ambiguity set in \cite{Rujeerapaiboon16} is the restriction of ours where $\rho_t = 0$ for every $t = 1, \hdots, T-1$. The uncorrelatedness assumption allows \cite{Rujeerapaiboon16} to solve this distributionally robust program efficiently because of the inherent temporal symmetry. Our work relaxes this assumption in order to accommodate investors' beliefs and market seasonality. In particular, we show that, despite the fact that the temporal symmetry breaks down, we can still derive an analytical expression of $\mathcal{G}_{\epsilon}(\bm{w})$ by using knowledge from linear algebra of circulant matrices; see e.g.~\cite{Gray01}. We highlight that even though the relaxation does not change the problem greatly, it still requires us to develop new mathematical techniques to accommodate these changes.

\section{Derivation of worst-case growth rates}
\label{sec:wcgrowthrate}
By using the semidefinite program reformulation for distributionally robust quadratic chance constraints provided in \cite{Zymler10}, we can rewrite $\mathcal{G}_{\epsilon}(\bm{w})$ as
\begin{equation}
\label{opt:projected_sdp}
\begin{aligned}
	\mathcal{G}_{\epsilon}(\bm{w}) ~=~
	&\max && \gamma \\ 
	&\st &&  \mathbf{M} \in \mathbb{S}^{T+1},~\beta \in \mathbb{R},~\gamma \in \mathbb{R} \\
	& && \beta + \tfrac{1}{\epsilon}\left<\bm{\Omega}(\bm{w}), \mathbf{M} \right> \leq 0, \quad \mathbf{M} \succeq \boldsymbol{0}  \\
	& && \mathbf{M} - \left[ \begin{array}{cc}
							\frac{1}{2}\mathbb{I} & -\frac{1}{2}\bm{1} \\
							-\frac{1}{2}\bm{1}^{\intercal} & \gamma T - \beta
					   \end{array} \right] \succeq \bm{0},
\end{aligned}
\end{equation} 
where $\bm{\Omega}(\bm{w}) \succ \bm{0}$ is the projected second-order moment matrix for a sequence of portfolio returns $[\tilde{\eta}_t]_{t \in \mathcal T}$ generated by any distribution residing in $\mathcal{P} (\bm w)$, i.e., 
\begin{equation*}
	\boldsymbol{\Omega}(\bm{w}) = 
	\left[ \begin{array}{c|c}
		\bm{w}^\intercal \bm{\Sigma w} \cdot \mathbf{P} + (\bm{w}^\intercal \bm{\mu})^2 \cdot \bm{11}^\intercal &
        \bm{w}^\intercal \bm{\mu} \cdot \bm{1} \\
        \hline 
        \bm{w}^\intercal \bm{\mu} \cdot \bm{1}^\intercal & 
        1
	\end{array} \right]
\end{equation*}
{\color{black} and $\mathbf{P}$ is the autocorrelation matrix defined as
\begin{equation}
\label{eq:corrmat}
\begin{aligned}
	\mathbf{P} = \left[ \begin{array}{cccc}
		\rho_0 & \rho_1 & \hdots & \rho_{T-1} \\
		\rho_{T-1} & \rho_0 & \hdots & \rho_{T-2} \\
		\vdots & \vdots & \ddots & \vdots \\
		\rho_1 & \rho_2 & \hdots & \rho_0
	\end{array} \right].
\end{aligned}
\end{equation}
Note that for $\mathbf{P}$ to be a proper and non-degenerate autocorrelation matrix, we require that: $\rho_0 = 1$, $\rho_t = \rho_{T-t}$ for $t = 1, \hdots, T-1$ to ensure that $\mathbf{P}$ is symmetric, and $\mathbf{P} \succ \bm{0}$. Similarly, we also assume that $\bm\Sigma \succ \bm 0$ to eliminate degenerate cases. We highlight that these assumptions are non-restrictive and are almost always satisfied when there exists no risk-free asset in $\mathcal N$. We henceforth assume throughout the paper that they hold. 
}

Observe that the dimension of $\mathbf M$ scales with $T$. Therefore, directly solving this program for large $T$ is computationally prohibitive. Fortunately in our case, the upper left part of $\bm\Omega(\bm w)$ forms a circulant matrix. 
\begin{defi}[circulant matrix]
For $c_0, c_1, \hdots, c_{T-1} \in \mathbb{R}$, a circulant matrix $\mathbf{C} = \text{circ}(c_0, c_1, \hdots \allowbreak , c_{T-1}) \in \mathbb{R}^{T \times T}$ is defined as
\begin{equation*}
	\mathbf{C} = 
	\left[ \begin{array}{cccc}
		c_0 &
		c_1 &
		\cdots &
		c_{T-1} \\
		c_{T-1} &
		c_0 &
		\cdots &
		c_{T-2} \\
		\vdots &
		\vdots &
		\ddots &
		\vdots \\
		c_1 &
		c_2 &
		\cdots &
		c_0
	\end{array} \right].
\end{equation*}
\end{defi}
We circumvent the complexity issue in solving formulation \eqref{opt:projected_sdp} by exploiting the circulant property of matrix $\mathbf\Omega(\bm w)$. We demonstrate this result in Lemma 4.1 below.\begin{lemma}[circulant optimizer]
\label{lem:symmetry}
There exists an optimizer $(\mathbf{M}^\star, \beta^\star, \gamma^\star)$ of \eqref{opt:projected_sdp} where 
\begin{equation*}
	\mathbf{M}^\star = \left[ \begin{array}{c|c}
		\text{circ}(m_0, \hdots, m_{T-1}) & m_T \bm{1} \\
		\hline
		m_T \bm{1}^\intercal & m_{T+1}
	\end{array} \right]
\end{equation*}
for some $m_0, \hdots, m_{T+1} \in \mathbb{R}$.
\end{lemma}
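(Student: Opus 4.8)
The plan is to exploit the invariance of problem \eqref{opt:projected_sdp} under cyclic relabellings of the first $T$ coordinates, and then to symmetrize an arbitrary optimizer. Let $\mathbf{Q} \in \mathbb{R}^{T \times T}$ be the cyclic permutation matrix sending coordinate $t$ to coordinate $(t \bmod T)+1$, and for $k = 0, \hdots, T-1$ set $\mathbf{R}_k = \text{diag}(\mathbf{Q}^k, 1) \in \mathbb{R}^{(T+1)\times(T+1)}$, which is orthogonal with $\mathbf{R}_k \mathbf{R}_j = \mathbf{R}_{(k+j) \bmod T}$. The first thing I would verify is that the data of \eqref{opt:projected_sdp} is $\mathbf{R}_k$-invariant. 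Since $\mathbf{P}$ in \eqref{eq:corrmat} is circulant it commutes with $\mathbf{Q}$; since $\bm{1}\bm{1}^\intercal$ and $\mathbb{I}$ are fixed under conjugation by $\mathbf{Q}$ and $\mathbf{Q}^\intercal \bm{1} = \bm{1}$, one gets $\mathbf{R}_k^\intercal \bm{\Omega}(\bm{w}) \mathbf{R}_k = \bm{\Omega}(\bm{w})$, and likewise $\mathbf{R}_k^\intercal \mathbf{K} \mathbf{R}_k = \mathbf{K}$ for the constant matrix $\mathbf{K} = \left[\begin{smallmatrix} \frac{1}{2}\mathbb{I} & -\frac{1}{2}\bm{1} \\ -\frac{1}{2}\bm{1}^\intercal & \gamma T - \beta \end{smallmatrix}\right]$ appearing in the last constraint of \eqref{opt:projected_sdp}, for every $k$.

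Next, let $(\mathbf{M}^0, \beta^0, \gamma^0)$ be an optimal solution of \eqref{opt:projected_sdp} (the maximum is attained because $\bm{\Omega}(\bm{w}) \succ \bm{0}$). For each $k$, the triple $(\mathbf{R}_k^\intercal \mathbf{M}^0 \mathbf{R}_k, \beta^0, \gamma^0)$ is again feasible with the same objective value: conjugation by the orthogonal $\mathbf{R}_k$ preserves $\mathbf{M}^0 \succeq \bm{0}$ and, using $\mathbf{R}_k^\intercal \mathbf{K} \mathbf{R}_k = \mathbf{K}$, also preserves $\mathbf{M}^0 - \mathbf{K} \succeq \bm{0}$; and by cyclicity of the trace together with the invariance of $\bm{\Omega}(\bm{w})$ we have $\langle \bm{\Omega}(\bm{w}), \mathbf{R}_k^\intercal \mathbf{M}^0 \mathbf{R}_k \rangle = \langle \mathbf{R}_k \bm{\Omega}(\bm{w}) \mathbf{R}_k^\intercal, \mathbf{M}^0 \rangle = \langle \bm{\Omega}(\bm{w}), \mathbf{M}^0 \rangle$, so the linear inequality is untouched. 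Since the feasible set of \eqref{opt:projected_sdp} is convex, the group average
\[
	\mathbf{M}^\star \;=\; \frac{1}{T} \sum_{k=0}^{T-1} \mathbf{R}_k^\intercal \mathbf{M}^0 \mathbf{R}_k, \qquad \beta^\star = \beta^0, \qquad \gamma^\star = \gamma^0,
\]
is feasible and optimal.

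Finally, I would read off the structure of $\mathbf{M}^\star$. By construction $\mathbf{R}_j^\intercal \mathbf{M}^\star \mathbf{R}_j = \mathbf{M}^\star$ for every $j$ (reindex the averaging sum using $\mathbf{R}_j \mathbf{R}_k = \mathbf{R}_{(k+j)\bmod T}$). Writing $\mathbf{M}^\star = \left[\begin{smallmatrix} \mathbf{A} & \bm{b} \\ \bm{b}^\intercal & c \end{smallmatrix}\right]$ with $\mathbf{A} \in \mathbb{S}^T$, $\bm{b} \in \mathbb{R}^T$, $c \in \mathbb{R}$, this invariance forces $\mathbf{Q}^\intercal \mathbf{A} \mathbf{Q} = \mathbf{A}$, i.e. $\mathbf{A}$ commutes with $\mathbf{Q}$ and hence is circulant, $\mathbf{A} = \text{circ}(m_0, \hdots, m_{T-1})$; it forces $\mathbf{Q}^\intercal \bm{b} = \bm{b}$, i.e. $\bm{b} = m_T \bm{1}$ for some scalar $m_T$; and it leaves $c =: m_{T+1}$ free. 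This is precisely the claimed form. The argument is a routine symmetrization, so the only points that need care are the $\mathbf{R}_k$-invariance of the problem data (which is where the circulant structure of $\mathbf{P}$ enters) and the fact that the linear constraint survives averaging, which follows from cyclicity of the trace scalar product. $\qquad\blacksquare$
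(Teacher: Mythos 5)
Your proof is correct and follows essentially the same symmetrization route as the paper: exploit the invariance of $\bm{\Omega}(\bm{w})$ and of the constant matrix in the second LMI of \eqref{opt:projected_sdp} under cyclic relabellings of the first $T$ coordinates, average an optimizer over the group, and read off the circulant structure from the resulting invariance. The only cosmetic difference is that the paper averages over $2T$ permutations (cyclic shifts in both directions), whereas you use only the $T$ forward shifts; both suffice, since the relation $m_t = m_{T-t}$ that the reversals would enforce already follows from $\mathbf{M}^\star \in \mathbb{S}^{T+1}$.
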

\begin{proof}
For any permutation $\pi$ of the integers $\lbrace 1, 2, \hdots, T+1 \rbrace$, denote by $\mathbf{P}_\pi$ the permutation matrix which is defined through $[\mathbf{P}_\pi]_{i,j} = 1$ if $\pi(i) = j; = 0$ otherwise. Denote by $\Pi$ the set of permutations with the following properties.
\begin{enumerate}
	\item
	$\pi(T+1) = T+1$.
	\item
	For $d \in \lbrace +1, -1 \rbrace$, $\left( \pi(T) - \pi(1) \right)_T = \left( \pi(i) - \pi(i+1) \right)_T = (d)_T$~for~$i = 1, \hdots, T-1$.
\end{enumerate}
Note that $\vert \Pi \vert = 2T$ because $\pi(1)$ can be chosen freely from $1, \hdots, T$ and there are two possibilities for shifting direction $d$. Due to the circulant property of $\bm\Omega(w)$, it can be observed that $\mathbf{P}_\pi \bm\Omega(\bm w) \mathbf{P}_\pi^\intercal = \bm\Omega(\bm w)$ for any $\pi \in \Pi$. The proposition now follows from an argument parallel to that of \cite[Proposition~3]{Rujeerapaiboon16}.
\end{proof}
Without any loss of generality, Lemma \ref{lem:symmetry} allows us to restrict our attention to $\mathbf{M}$ of a specific form consisting of a circulant matrix, an extra column, and an extra row. Note that $\mathbf{M}$ is symmetric, implying that $m_t = m_{T-t}$ for $t = 1, \hdots, T-1$. Lemma \ref{lem:eigen} below shows that eigenvalues of any symmetric circulant matrix can be analytically determined.
\begin{lemma}[eigenvalues of circulant matrices]
\label{lem:eigen}
All eigenvalues of a symmetric circulant matrix $\mathbf{C} = \text{circ}(c_0, c_1, \hdots, c_{T-1})$ are $\sum_{t=0}^{T-1} c_t \cos \left( \frac{2\pi j t}{T}\right)$,~ $j = 0, \hdots, T-1$.
\end{lemma}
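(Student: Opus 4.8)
The plan is to invoke the classical fact that every circulant matrix is diagonalized by the discrete Fourier basis, and then to use the symmetry hypothesis $c_t = c_{T-t}$ to collapse the resulting complex exponentials into cosines. First I would introduce the cyclic shift matrix $\mathbf{S} = \text{circ}(0,1,0,\hdots,0) \in \mathbb{R}^{T \times T}$, which satisfies $\mathbf{S}^T = \mathbb{I}$, and observe that any circulant matrix decomposes as $\mathbf{C} = \sum_{t=0}^{T-1} c_t \mathbf{S}^t$. Next I would exhibit, for each $j \in \lbrace 0, \hdots, T-1 \rbrace$, the vector $\bm{v}_j = (1, \omega^j, \omega^{2j}, \hdots, \omega^{(T-1)j})^\intercal$ with $\omega = e^{2\pi i / T}$ the primitive $T$-th root of unity, and verify directly that $\mathbf{S}\bm{v}_j = \omega^j \bm{v}_j$.

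Since the powers $\mathbf{S}^t$ all share these eigenvectors, it follows that $\mathbf{C}\bm{v}_j = \bigl( \sum_{t=0}^{T-1} c_t \omega^{jt} \bigr) \bm{v}_j$, so $\lambda_j := \sum_{t=0}^{T-1} c_t \omega^{jt}$ is an eigenvalue of $\mathbf{C}$ for every $j$. Because the nodes $\omega^0, \omega^1, \hdots, \omega^{T-1}$ are distinct, the matrix $[\bm{v}_0 \mid \cdots \mid \bm{v}_{T-1}]$ is a nonsingular Vandermonde matrix, hence $\lbrace \bm{v}_j \rbrace_{j=0}^{T-1}$ is a basis of $\mathbb{C}^T$; consequently the list $\lambda_0, \hdots, \lambda_{T-1}$ exhausts the spectrum of $\mathbf{C}$, counting multiplicities.

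It then remains to rewrite $\lambda_j$ in the stated real form. Here I would use the symmetry assumption: since $\mathbf{C}$ is real and symmetric, each $\lambda_j$ is real, so $\lambda_j = \text{Re}(\lambda_j) = \sum_{t=0}^{T-1} c_t \, \text{Re}(\omega^{jt}) = \sum_{t=0}^{T-1} c_t \cos\bigl( \frac{2\pi j t}{T} \bigr)$, the last equality using that each $c_t$ is real. Equivalently, one can pair the $t$-th and $(T-t)$-th terms of the sum and use $c_t = c_{T-t}$ together with $\omega^{j(T-t)} = \overline{\omega^{jt}}$ to see that the imaginary parts cancel termwise.

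I do not anticipate a genuine obstacle, as this is a standard computation; the only points requiring a little care are (i) justifying that the $T$ values indexed by $j = 0, \hdots, T-1$ account for \emph{all} eigenvalues with the correct multiplicities, which the Vandermonde/basis argument handles, and (ii) noting that the passage from complex exponentials to cosines genuinely uses the symmetry of $\mathbf{C}$, since it is this that guarantees the $\lambda_j$ are real in the first place.
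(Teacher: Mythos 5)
Your proposal is correct and follows essentially the same route as the paper: obtain the eigenvalues $\sum_{t=0}^{T-1} c_t \omega_j^t$ from the Fourier diagonalization of circulant matrices, then use the realness of the spectrum of a real symmetric matrix to discard the imaginary parts. The only difference is that you prove the classical diagonalization fact from scratch (shift matrix plus Vandermonde argument) where the paper simply cites it.
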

\begin{proof}
The eigenvalues of $\mathbf{C} = \text{circ}(c_0, c_1, \hdots, c_{T-1})$ are $\sum_{t=0}^{T-1} c_t \omega_j^t$ for $j = 0, \hdots, T-1$ where $\omega_j$ are $j$th roots of unity (see, for example, \cite[Chapter~3]{Gray01}). In addition, we know that all eigenvalues are real because $\mathbf{C}$ is symmetric. Dropping the imaginary parts in the expression of the eigenvalues, the claim follows.
\end{proof}

We are now ready to simplify \eqref{opt:projected_sdp} from a semidefinite program to a second-order cone program. To achieve this, we consider the first semidefinite constraint $\mathbf{M} \succeq \bm{0}$. Using Lemma \ref{lem:symmetry} and Lemma \ref{lem:eigen} together, we may rewrite this constraint as follows\footnote{The end result still holds when $m_{T+1} = 0$, which can be treated via case distinction. However, we omit this argument for the sake of brevity.}:
\begin{equation*}
\begin{aligned}
	\mathbf{M} \succeq \bm{0} \quad &\Longleftrightarrow \quad
	m_{T+1} \geq 0, \quad 
	\text{circ}(m_0, \hdots, m_{T-1}) \succeq \textstyle\frac{m_T^2}{m_{T+1}} \bm{11}^\intercal \\
	&\Longleftrightarrow \quad
	m_{T+1} \geq 0, \quad 	
	\text{circ}(m_0 - m_T^2/m_{T+1}, \hdots, m_{T-1} - m_T^2/m_{T+1}) \succeq \bm{0} \\
	&\Longleftrightarrow \quad
	m_{T+1} \geq 0, \quad 	
	\textstyle\sum_{t=0}^{T-1} \left( m_t - m_T^2/m_{T+1} \right)\cos \left( \frac{2\pi j t}{T}\right) \geq 0
	\quad j = 0, \hdots, T-1 \\
	&\Longleftrightarrow \quad
	m_{T+1} \geq 0, \quad 
	m_{T+1} \textstyle\sum_{t=0}^{T-1} m_t \geq Tm_T^2, \quad 
	\sum_{t=0}^{T-1} m_t \cos \left( \frac{2\pi j t}{T}\right) \geq 0
	\quad j = 1, \hdots, T-1,
\end{aligned}
\end{equation*}
where the first equivalence holds due to Schur complement, the third equivalence follows from Lemma \ref{lem:eigen}, and the last equivalence follows from 
\begin{equation}
\label{eq:sumroots}
	\sum_{t=0}^{T-1} \cos \left( \frac{2\pi j t}{T}\right) = \sum_{t=0}^{T-1}\text{Re}(\omega_j^t) = 	\text{Re} \left( \frac{1-\omega_j^T}{1-\omega_j} \right) = 0 \quad\text{for}~ j = 1, \hdots, T-1,
\end{equation}
where $\omega_j$ represent the $j$th roots of unity. The other semidefinite constraint in \eqref{opt:projected_sdp} can be reformulated in a similar manner, and thus we end up with the following reformulation of \eqref{opt:projected_sdp}.
\begin{subequations}
\label{opt:projected_socp}
\begin{alignat}{2}
	\mathcal{G}_{\epsilon}(\bm{w}) ~=~
	&\max~~ && \gamma \nonumber \\ 
	&\st &&  (m_0, \hdots, m_{T+1})^\intercal \in \mathbb{R}^{T+2},~\beta \in \mathbb{R},~\gamma \in \mathbb{R} \\
	& && m_{T+1} \geq 0, \quad
	     m_{T+1} - \gamma T + \beta \geq 0, \quad
	     m_t = m_{T-t} \quad t = 1, \hdots, T-1 \\
	& && m_{T+1} \textstyle\sum_{t=0}^{T-1} m_t \geq Tm_T^2, \quad
	     (m_{T+1} - \gamma T + \beta) \left( m_0 - \textstyle\frac{1}{2} + \sum_{t=1}^{T-1} m_t \right) \geq T(m_T + \textstyle\frac{1}{2})^2 \\
	& && m_0 - \textstyle\frac{1}{2} + \textstyle\sum_{t=1}^{T-1} m_t \cos \left( \frac{2\pi j t}{T}\right) \geq 0 \quad j = 1, \hdots, T-1 \label{eq:const-1}\\
    & && \epsilon\beta + T\textstyle\sum_{t=0}^{T-1} \left( \rho_t \bm w^\intercal \bm\Sigma \bm w + (\bm w^\intercal\bm\mu)^2\right) m_t + 2T\bm w^\intercal\bm\mu m_T + m_{T+1} \leq 0 \label{eq:const-2}
\end{alignat}
\end{subequations}
Note that all constraints in \eqref{opt:projected_socp} are either linear or hyperbolic, i.e., second-order cone representable. Hence, we have reformulated formulation \eqref{opt:projected_sdp} as a second-order cone program. Below, we derive an analytical solution for $\mathcal G_\epsilon(\bm w)$. Suppose that $x = (m_0, \ldots, m_{T+1}, \beta, \gamma)^{\top}$ is an optimal solution to formulation \eqref{opt:projected_socp}. Construct a new solution 
\begin{equation*}
	\textstyle
	\bm x' = \left( \underbrace{\left(\sum_{t=0}^{T-1} m_t - \frac{1}{2}\right)/T + \frac{1}{2}}_{m'_0}, \underbrace{\left(\sum_{t=0}^{T-1} m_t - \frac{1}{2}\right)/T}_{m'_1}, \hdots, \underbrace{\left(\sum_{t=0}^{T-1} m_t - \frac{1}{2}\right)/T}_{m'_{T-1}}, m_T, m_{T+1}, \beta, \gamma \right)^\intercal.
\end{equation*}
$\bm x'$ satisfies all constraints of \eqref{opt:projected_socp} except \eqref{eq:const-1}--\eqref{eq:const-2} because the transformation from $\bm x$ to $\bm x'$ preserves $\sum_{t=0}^{T-1} m_t$, i.e., $\sum_{t=0}^{T-1} m'_t = \sum_{t=0}^{T-1} m_t$. In the following, we argue that $\bm x'$ is indeed feasible in the view of constraints \eqref{eq:const-1}--\eqref{eq:const-2} as well. 

For constraint \eqref{eq:const-1}, $\bm x'$ is indeed feasible because $m'_0 - \frac{1}{2} = m'_1 = m'_2 = \hdots = m'_{T-1}$ and $\sum_{t=1}^{T-1} \cos \left( \frac{2\pi j t}{T}\right) = -1$, as previously pointed out in \eqref{eq:sumroots}.

For constraint \eqref{eq:const-2}, it is easier to look at the original version of this constraint in \eqref{opt:projected_sdp}, i.e., $\beta + \frac{1}{\epsilon} \left<\bm{\Omega}(\bm{w}), \mathbf{M} \right> \leq 0$. 
Note that, it is sufficient to show that
\begin{equation*}
	\textstyle
	\text{circ}(m'_0, m'_1, \hdots, m'_{T-1}) \preceq \text{circ}(m_0, m_1, \hdots, m_{T-1}). 
\end{equation*}
We let $m'$ denote the value shared by $m'_1, \hdots, m'_{T-1}$. Since the difference between two circulant matrices remains circulant, the above positive semidefinite constraint holds iff
\begin{equation*}
\begin{aligned}
	\text{circ}( m' + \textstyle\frac{1}{2}& - m_0, m' - m_1, \hdots, m' - m_{T-1} ) \preceq \bm{0} \\
	&\Longleftrightarrow ~
	m' + \textstyle\frac{1}{2} - m_0 + \textstyle\sum_{t=1}^{T-1} (m' - m_t)\omega_j^t \leq 0 \quad j = 0, \hdots, T-1 \\
	&\Longleftrightarrow ~
	m' + \textstyle\frac{1}{2} - m_0 + \textstyle\sum_{t=1}^{T-1} (m' - m_t)\omega_j^t \leq 0 \quad j = 1, \hdots, T-1 \\
	&\Longleftrightarrow ~
	\textstyle\frac{1}{2} + m' \textstyle\sum_{t=0}^{T-1} \omega_j^t \leq \sum_{t=0}^{T-1} m_t\omega_j^t \quad j = 1, \hdots, T-1 \\
	&\Longleftrightarrow ~
	\textstyle\frac{1}{2} + m' \left( \frac{1-\omega_j^T}{1-\omega_j} \right) \leq \sum_{t=0}^{T-1} m_t\omega_j^t \quad j = 1, \hdots, T-1 \\
	&\Longleftrightarrow ~
	\textstyle\frac{1}{2} \leq \sum_{t=0}^{T-1} m_t \text{Re}(\omega_j^t) \quad j = 1, \hdots, T-1 \\
	&\Longleftarrow ~
	 m_0 - \textstyle\frac{1}{2} + \textstyle\sum_{t=1}^{T-1} m_t \cos \left( \frac{2\pi j t}{T}\right) \geq 0 \quad j = 1, \hdots, T-1 \\
	&\Longleftarrow ~ 
	 \text{Feasibility of } \bm x ,
\end{aligned}
\end{equation*}
where the first equivalence is a consequence of Lemma \ref{lem:eigen} and the second equivalence excludes the case where $\omega = 1$ which trivially holds because the definition of $m'$ implies $Tm' + \frac{1}{2} = \sum_{t=0}^{T-1}m'_t = \sum_{t=0}^{T-1}m_t$.

Therefore, $\bm{x}'$ is feasible to \eqref{opt:projected_socp} and also optimal because it shares the same objective function value with the original optimal solution $\bm{x}$. It follows that, without loss of optimality, we can assume that $m_0 - \frac{1}{2} = m_1 = \cdots = m_{T-1}$ in \eqref{opt:projected_socp}. Finally, we obtain the analytical solution to \eqref{opt:projected_socp} in Theorem \ref{thm:wcvar} below.

\begin{thm}[worst-case growth rate]
\label{thm:wcvar}
If $1 - \bm{w}^\intercal\bm\mu > \sqrt{\frac{(1 + (T-1)\bar\rho)\epsilon}{(1-\epsilon) T}} \Vert \bm\Sigma^{1/2} \bm w \Vert$, then
\begin{equation*}
	\mathcal{G}_\epsilon(\bm w) = \frac{1}{2} \left( 1 - \left( 1 - \bm w^\intercal\bm\mu + \sqrt{\frac{(1-\epsilon)(1 + (T-1)\bar\rho)}{\epsilon T}} \Vert \bm{\Sigma}^{1/2}\bm w \Vert\right)^2 - \frac{T-1-(T-1)\bar\rho}{\epsilon T}\bm{w}^\intercal\bm\Sigma \bm w \right),
\end{equation*}	
where $\bar\rho$ is a constant defined as $(\sum_{t=1}^{T-1} \rho_t)/(T-1)$.
\end{thm}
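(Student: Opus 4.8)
The plan is to work from the second-order cone reformulation \eqref{opt:projected_socp} together with the reduction established immediately above the theorem: without loss of optimality we may take $m_0 - \tfrac12 = m_1 = \cdots = m_{T-1}$, so write $m$ for this common value, whence $\mathrm{circ}(m_0,\dots,m_{T-1}) = m\,\bm1\bm1^\intercal + \tfrac12\mathbb I$ and $\sum_{t=0}^{T-1} m_t = Tm + \tfrac12$. Under this reduction the constraints \eqref{eq:const-1} reduce to $0\ge 0$, since $\sum_{t=1}^{T-1}\cos(\tfrac{2\pi jt}{T}) = -1$ by \eqref{eq:sumroots} and this cancels the $-\tfrac12$. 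Only the two hyperbolic (rotated-cone) constraints, the sign constraints, and \eqref{eq:const-2} remain. Introducing $u := m_{T+1} - \gamma T + \beta \ge 0$ and eliminating $\beta$ from \eqref{eq:const-2} expresses $\gamma$ as an affine function of $(m, m_T, m_{T+1}, u)$ in which both $m_{T+1}$ and $u$ carry negative coefficients; hence at optimality each sits at the lower bound forced by its hyperbolic constraint, namely $m_{T+1} = T m_T^2/(Tm+\tfrac12)$ and $u = (m_T+\tfrac12)^2/m$. (The degenerate cases $m = 0$ and $m_{T+1} = 0$ are disposed of separately, the latter already flagged as omitted.) What is left is an \emph{unconstrained} maximization of $\mathcal G_\epsilon(\bm w)$ over the two scalars $m > 0$ and $m_T \in \mathbb R$.

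Next I would substitute $y := Tm > 0$ and, for each fixed $y$, maximize over $m_T$. The objective is a concave quadratic in $m_T$, so completing the square gives the optimal $m_T$ in closed form and, after simplification, the one-dimensional problem $\epsilon\,\mathcal G_\epsilon(\bm w) = \max_{y>0}\, F(y) \;-\; \tfrac12\,\bm w^\intercal\bm\Sigma\bm w \;+\; \tfrac{\epsilon}{2}\,\bm w^\intercal\bm\mu\,(2-\bm w^\intercal\bm\mu)$, where $F(y) = -\,\dfrac{\epsilon(1-\epsilon)(1-\bm w^\intercal\bm\mu)^2}{2(2y+\epsilon)} \;-\; \dfrac{(1+(T-1)\bar\rho)\,\bm w^\intercal\bm\Sigma\bm w}{T}\,y$. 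The simplifications that make this collapse are that the two $y\,(\bm w^\intercal\bm\mu)^2$ terms (one from the $(\bm w^\intercal\bm\mu)^2$ entry of $\bm\Omega(\bm w)$, one from the $m_T$-optimization) cancel, and that $-1 + 2\bm w^\intercal\bm\mu - (\bm w^\intercal\bm\mu)^2 = -(1-\bm w^\intercal\bm\mu)^2$, which also explains the appearance of $\tfrac12\bm w^\intercal\bm\mu(2-\bm w^\intercal\bm\mu) = \tfrac12\bigl(1-(1-\bm w^\intercal\bm\mu)^2\bigr)$.

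Since $F$ is strictly concave on $(-\tfrac{\epsilon}{2},\infty)$, the stationarity condition $F'(y)=0$ pins down $2y^\star + \epsilon = (1-\bm w^\intercal\bm\mu)\sqrt{T\epsilon(1-\epsilon)/\bigl((1+(T-1)\bar\rho)\,\bm w^\intercal\bm\Sigma\bm w\bigr)}$, and the requirement $y^\star > 0$ is, upon squaring, exactly the hypothesis $1 - \bm w^\intercal\bm\mu > \sqrt{(1+(T-1)\bar\rho)\epsilon/((1-\epsilon)T)}\,\Vert\bm\Sigma^{1/2}\bm w\Vert$; strict concavity then certifies that $y^\star$ is the global maximizer. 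Substituting $y^\star$ back, using the squared relation to rewrite the first term of $F(y^\star)$, dividing through by $\epsilon$, and applying the identity $\epsilon(1+(T-1)\bar\rho) - T = -(1-\epsilon)(1+(T-1)\bar\rho) - (T-1)(1-\bar\rho)$ (which follows from $1+(T-1)\bar\rho = \rho_0 + \sum_{t=1}^{T-1}\rho_t$) to recast the variance coefficient, delivers precisely the stated expression for $\mathcal G_\epsilon(\bm w)$.

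I expect the principal obstacle to be not any isolated conceptual step but the sheer bookkeeping: checking that $m_{T+1}$ and $u$ can be driven to their lower bounds \emph{simultaneously} without breaking feasibility; keeping all the $\epsilon$'s and $\tfrac12$'s straight through the completion of the square and the substitution $y = Tm$; and confirming that the three algebraic collapses (cancellation of the $y\,(\bm w^\intercal\bm\mu)^2$ terms, the $(1-\bm w^\intercal\bm\mu)^2$ factorization, and the final variance-coefficient identity) are genuinely what reconciles the derived optimum with the compact closed form. A short separate argument is also needed to verify that the boundary cases $m = 0$ and $m_{T+1} = 0$ cannot beat the interior optimum — and indeed the $m=0$ value is just the $y \to 0^+$ limit of $F$, which is dominated by $F(y^\star)$ under the theorem's hypothesis.
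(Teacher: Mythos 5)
Your proposal is correct, and it survives the algebraic stress test. With $A=\tfrac{2(1-\epsilon)}{2y+1}$ and $B=\tfrac{\epsilon}{y}$, maximizing over $m_T$ gives $\tfrac{(2\bm w^\intercal\bm\mu\, y+\epsilon)^2(2y+1)}{4y(2y+\epsilon)}-\tfrac{\epsilon}{4y}$ plus the affine remainder, and this does collapse to your $F(y)$ plus the claimed constant $-\tfrac12\bm w^\intercal\bm\Sigma\bm w+\tfrac{\epsilon}{2}\bm w^\intercal\bm\mu(2-\bm w^\intercal\bm\mu)$: clearing denominators, the cubic terms $\pm 8(\bm w^\intercal\bm\mu)^2y^3$ cancel exactly and the linear coefficient reduces via $\epsilon\mu(2-\mu)-(1-\epsilon)(1-\mu)^2=\epsilon-(1-\mu)^2$. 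Your stationarity condition for $y^\star$ is right, the requirement $y^\star>0$ is precisely the theorem's hypothesis, and $F(y^\star)=-(1-\bm w^\intercal\bm\mu)\sqrt{\epsilon(1-\epsilon)(1+(T-1)\bar\rho)\bm w^\intercal\bm\Sigma\bm w/T}+\tfrac{\epsilon(1+(T-1)\bar\rho)}{2T}\bm w^\intercal\bm\Sigma\bm w$ reproduces the stated closed form. The route differs from the paper's only in its final step: both rely on the reduction, established just before the theorem, to an optimal $\mathbf M$ with $m_0-\tfrac12=m_1=\cdots=m_{T-1}$ (compound symmetric $\mathbf M$), but the paper then simply cites Lemma~1 of \cite{Rujeerapaiboon16}, which already packages the closed-form solution of the resulting two-block problem, whereas you re-derive that closed form explicitly from the SOCP \eqref{opt:projected_socp}. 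Your version is longer but self-contained, and it makes visible where the hypothesis $1-\bm w^\intercal\bm\mu>\sqrt{\tfrac{(1+(T-1)\bar\rho)\epsilon}{(1-\epsilon)T}}\Vert\bm\Sigma^{1/2}\bm w\Vert$ actually enters (as $y^\star>0$), which the paper's citation hides.

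One bookkeeping item beyond the two degenerate cases you flag: you should also dispose of $m<0$ (equivalently $y<0$), not only $m=0$. There, constraint $(m_{T+1}-\gamma T+\beta)\,Tm\ge T(m_T+\tfrac12)^2$ together with $m_{T+1}-\gamma T+\beta\ge 0$ forces $m_T=-\tfrac12$ and the slack to vanish, and one checks the resulting objective is dominated; this also covers the possibility $Tm+\tfrac12\le 0$ used implicitly when you invoke the Schur-type lower bound on $m_{T+1}$. This is the same class of boundary argument the paper itself waves away in its footnote on $m_{T+1}=0$, so it is not a gap relative to the paper's own standard of rigor, but it belongs in a complete write-up.
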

\begin{proof}
The latest implication of the discussion prior the theorem suggests that there exists an optimal matrix $\mathbf{M}$ which is compound symmetric (see \cite[Definition~5]{Rujeerapaiboon16}). Therefore, \eqref{opt:projected_sdp} reduces to an optimization problem with two positive semidefinite constraints involving compound symmetric matrices. 
The claim thus follows by invoking \cite[Lemma~1]{Rujeerapaiboon16} through appropriate variable substitutions.
\end{proof}

Lemma~\ref{rem:rgop} below relates Theorem~\ref{thm:wcvar} with Theorem~2 in \cite{Rujeerapaiboon16}, whereas Corollary~\ref{cor:rgop} demonstrates how the portfolio with maximum worst-case growth rate  can be achieved. In the same spirit as \cite{Rujeerapaiboon16}, we refer to this portfolio as an \emph{extended robust growth-optimal portfolio}.

\begin{rem}
\label{rem:rgop}
When $\bar\rho = 0$, we recover the result from \cite[Theorem~2]{Rujeerapaiboon16} which provides a solution for the special case where $\rho_1 = \hdots = \rho_{T-1} = 0$. However, Theorem \ref{thm:wcvar} generalizes this result and implies that \cite[Theorem~2]{Rujeerapaiboon16} holds as long as $\sum_{t=1}^{T-1} \rho_t = 0$.
\end{rem}

\begin{cor}[maximizing worst-case growth rate]
\label{cor:rgop}
If $\mathcal{W}$ is a polyhedral subset of the probability~simplex in $\mathbb{R}^N$ representing a set of allowable portfolios and inequality $1 - \bm{w}^\intercal\bm\mu > \sqrt{\frac{(1 + (T-1)\bar\rho)\epsilon}{(1-\epsilon) T}} \cdot \allowbreak \Vert \bm\Sigma^{1/2} \bm w \Vert$ holds for every $\bm w \in \mathcal W$, then a portfolio $\bm w \in \mathcal W$ with maximum $\mathcal G_\epsilon (\bm w)$ can be obtained by solving a tractable second-order cone program whose size scales with the number of assets $N$ but is independent of the investment horizon $T$.
\end{cor}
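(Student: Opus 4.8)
The plan is to substitute the closed-form expression from Theorem~\ref{thm:wcvar} into the maximization over $\mathcal W$ and then recast it through an epigraph reformulation. Abbreviate $c_1 = \sqrt{(1-\epsilon)(1+(T-1)\bar\rho)/(\epsilon T)}$ and $c_2 = \sqrt{(T-1)(1-\bar\rho)/(\epsilon T)}$. Both are real, and $c_1>0$: by Lemma~\ref{lem:eigen} the quantity $1+(T-1)\bar\rho=\sum_{t=0}^{T-1}\rho_t$ is the eigenvalue of the circulant matrix $\mathbf P$ associated with the eigenvector $\bm 1$, hence strictly positive since $\mathbf P\succ\bm 0$, and moreover it is strictly less than $\mathrm{tr}(\mathbf P)=T$ because the remaining $T-1\ge 1$ eigenvalues are positive, so $\bar\rho<1$ and $c_2$ is real. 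With this notation, Theorem~\ref{thm:wcvar} shows that maximizing $\mathcal G_\epsilon(\bm w)$ over $\bm w\in\mathcal W$ is equivalent to
\[
	\min_{\bm w \in \mathcal W}~ \bigl( 1 - \bm w^\intercal\bm\mu + c_1 \Vert\bm\Sigma^{1/2}\bm w\Vert \bigr)^2 + c_2^2 \Vert\bm\Sigma^{1/2}\bm w\Vert^2 ,
\]
with the minimum attained because $\mathcal W$, being a polyhedral subset of the probability simplex, is compact.

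Next I would lift $\Vert\bm\Sigma^{1/2}\bm w\Vert$ to an auxiliary scalar $v$ and relax the equality $v=\Vert\bm\Sigma^{1/2}\bm w\Vert$ to the conic inequality $v\ge\Vert\bm\Sigma^{1/2}\bm w\Vert$, obtaining
\[
	\min_{\bm w \in \mathcal W,~ v \ge \Vert\bm\Sigma^{1/2}\bm w\Vert}~ \bigl( 1 - \bm w^\intercal\bm\mu + c_1 v \bigr)^2 + c_2^2 v^2 .
\]
The crux is to show that this relaxation is exact. For any $\bm w\in\mathcal W$ the hypothesis of the corollary gives $1-\bm w^\intercal\bm\mu > \sqrt{(1+(T-1)\bar\rho)\epsilon/((1-\epsilon)T)}\,\Vert\bm\Sigma^{1/2}\bm w\Vert \ge 0$, so for every feasible $v\ge\Vert\bm\Sigma^{1/2}\bm w\Vert\ge 0$ one has $1-\bm w^\intercal\bm\mu+c_1 v>0$; since $c_1>0$ and $c_2^2 v\ge 0$, the partial derivative of the objective in $v$, equal to $2c_1(1-\bm w^\intercal\bm\mu+c_1 v)+2c_2^2 v$, is strictly positive. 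Hence at any optimizer the constraint $v\ge\Vert\bm\Sigma^{1/2}\bm w\Vert$ must be active, and the relaxation has the same optimal value and the same optimal $\bm w$ as the preceding problem.

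Finally, the objective of the relaxation equals $\Vert(1-\bm w^\intercal\bm\mu+c_1 v,\ c_2 v)^\intercal\Vert^2$, a nonnegative quantity, so minimizing it coincides with minimizing its square root; introducing an epigraph variable $\tau$ then yields the second-order cone program
\[
\begin{aligned}
	&\min_{\bm w,\,v,\,\tau}~~ && \tau \\
	&~\st && \bm w \in \mathcal W,\quad v \ge \Vert\bm\Sigma^{1/2}\bm w\Vert,\quad \tau \ge \bigl\Vert (1-\bm w^\intercal\bm\mu+c_1 v,\ c_2 v)^\intercal \bigr\Vert .
\end{aligned}
\]
The description of $\mathcal W$ contributes only finitely many linear inequalities in $\bm w\in\mathbb R^N$, and the two conic constraints have dimensions $N+1$ and $3$; the horizon $T$ enters solely through the scalar coefficients $c_1,c_2$, so the size of the program scales with $N$ and is independent of $T$, as claimed. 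The step that requires genuine care is the exactness of the norm relaxation: it rests entirely on the strict positivity of $1-\bm w^\intercal\bm\mu+c_1 v$, which is precisely what the corollary's standing inequality is there to guarantee; everything else is routine second-order cone modeling.
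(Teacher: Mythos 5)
Your proposal is correct and matches the route the paper intends: the commented-out program in the corollary's statement is exactly the second-order cone program you derive, and the paper simply declares that it ``immediately follows'' from Theorem~\ref{thm:wcvar}. You supply the details the paper omits --- in particular the exactness of the relaxation $v \geq \Vert\bm\Sigma^{1/2}\bm w\Vert$ via strict monotonicity of the objective in $v$, which is precisely where the corollary's standing inequality is used --- so this is the same argument, just written out.
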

\begin{proof}
The claim immediately follows from Theorem \ref{thm:wcvar} and so the proof is omitted.
\end{proof}

{\color{black}\subsection{Translation of autocorrelation risk to excess variance} \label{sec:risk}}
We observe that under the same first- and second-order moments of the asset return distribution, only aggregate autocorrelation $\bar\rho = (\sum_{t=1}^{T-1} \rho_t)/(T-1)$ alters $\mathcal G_\epsilon (\bm w)$. That is, individual changes in $\rho_t, ~ t = 1, \hdots, T-1$ contribute to no risk in our model as long as the aggregate correlation $\bar\rho$ remains the same. Moreover, as a consequence from Theorem \ref{thm:wcvar}, we make the following observation which encapsulates the financial risk from autocorrelations in the covariance matrix $\bm \Sigma$ of the asset return distribution. To further elaborate, we consider another representation of the result from Theorem \ref{thm:wcvar}, where we consider $-\mathcal G_\epsilon (\bm w)$ as the total risk associated with portfolio $\bm w$,
\begin{equation*}
	-\mathcal{G}_\epsilon(\bm w)\ =\ \underbrace{\vphantom{\left(\left(\sqrt{\frac{1}{T}}\right)^2\right)} \frac{1}{2\epsilon} \bm w^\intercal \bm\Sigma \bm w}_{\text{persistent risk}} \underbrace{-\ \frac{1}{2} \left( 1 - \left( 1 - \bm w^\intercal\bm\mu + \sqrt{\frac{1-\epsilon}{\epsilon T}} \Vert \hat{\bm{\Sigma}}^{1/2}\bm w \Vert\right)^2 + \frac{1}{\epsilon T}\bm{w}^\intercal\hat{\bm\Sigma} \bm w \right)}_{\text{compounding risk}},
\end{equation*}
where $\hat{\bm\Sigma}$ is a modified covariance matrix defined by $(1 + (T-1)\bar\rho)\bm\Sigma$. From this reformulation of $-\mathcal{G}_\epsilon(\bm w)$, the term $\frac{1}{2\epsilon}\bm w^\intercal \bm\Sigma \bm w$ is independent of the autocorrelations. Hence, we refer to this term as a \emph{persistent risk} and refer to the remaining part as a \emph{compounding risk}.

The persistent risk is intuitive as it is proportional to portfolio variance $\bm w^\intercal \bm\Sigma \bm w$ and inversely proportional to $\epsilon$. To understand the compounding risk better, assume that there are two investors sharing the same asset universe $\mathcal N$, the same investment horizon $\mathcal T$, and the same probabilistic preference $\epsilon$. The first investor believes that the mean and covariance matrix of the asset return distribution are given by $\bm\mu^{(1)}$ and $\bm\Sigma^{(1)}$, respectively, and her aggregate autocorrelation is $\bar\rho^{(1)}$. The second investor believes that the market is serially uncorrelated (implying that her $\bar\rho^{(2)}$ is $0$). If we further assume that the both investors share the same mean information, i.e., $\bm\mu^{(2)} = \bm\mu^{(1)}$, but the covariance matrix of the second investor is $\bm\Sigma^{(2)} = (1+(T-1)\bar\rho^{(1)})\bm\Sigma^{(1)}$. Compounding risk calculated under the view of the first investor is equivalent to that calculated under the view of the second investor. This allows us to transform a \emph{serially correlated} market into a \emph{serially uncorrelated} market by absorbing the autocorrelations in the covariance matrix.

Further to this observation, when $\bar\rho > 0$, the modified covariance matrix $\hat{\bm\Sigma} = (1+(T-1)\bar\rho)\bm\Sigma$ is larger (with respect to both non-negative and positive semidefinite cones) as $\bar\rho$ increases, implying that the investors are exposed to a higher compounding risk. Hence, they should exercise more caution as they are more exposed to potential losses. Indeed, when the autocorrelation is positive, then the market is inclined to move either upwards or downwards. By being robust, we take into account the possibility of the market moving downwards, putting more mass in the left tail of the distribution of the total profit under the worst case. {\color{black} Consequently, we anticipate that the optimal portfolio in the view of Corollary \ref{cor:rgop} favors less risky assets as $\bar\rho$ increases. We visualize this argument with an example based on real data in Section \ref{sec:asset_pref} below.} \\

{\color{black}\subsection{Asset preference in the presence of autocorrelations} \label{sec:asset_pref}}
We perform an experiment to determine a worst-case growth rate of every mean-variance efficient portfolio under different values of $\bar\rho$ ranging from 0\% to 20\% with a step size of 5\%. In this experiment, $\bm\mu$ and $\bm\Sigma$ are calibrated to the sample mean and the sample covariance matrix of \emph{``10 Industry Portfolios''} data set (from January 2003 until December 2012) provided in \emph{Fama-French}\footnote{\url{http://mba.tuck.dartmouth.edu/pages/faculty/ken.french/data_library.html}} online data library, respectively. The means and the standard deviations of the monthly returns of these 10 assets are given in Table \ref{tab:10ind}.\footnote{We exclude the covariances from Table \ref{tab:10ind} due to the lack of space.}
\begin{table}[tb]
\centering
\begin{tabular}{r|rrrrrrrrrr}
     \multicolumn{1}{c}{} & \multicolumn{10}{c}{Asset universe $\mathcal{N}$} \\
     \multicolumn{1}{c}{} & \multicolumn{1}{c}{1}    & \multicolumn{1}{c}{2}    & \multicolumn{1}{c}{3}   & \multicolumn{1}{c}{4}   & \multicolumn{1}{c}{5}   & \multicolumn{1}{c}{6}  & \multicolumn{1}{c}{7}   & \multicolumn{1}{c}{8}   & \multicolumn{1}{c}{9}    & \multicolumn{1}{c}{10}    \\ \hline
Mean (\%)  & 0.85 & 0.75 & 0.98 & 1.25 & 0.87 & 0.76 & 0.89 & 0.65 & 0.98  & 0.45  \\
Standard deviation (\%) & 3.44 & 8.53 & 5.41 & 6.19 & 5.52 & 4.66 & 4.24 & 3.66 & 3.79 & 5.62 \\ \hline \hline
\end{tabular}
\caption{Means and standard deviations of the 10 Industry Portfolios data set. \label{tab:10ind}}
\end{table}
Figure \ref{fig:wcvar_frontier} displays the worst-case growth rates of all efficient portfolios when $T$ and $\epsilon$ are set to 360 months and 20\%, respectively.
\begin{figure}
	\centering
	\begin{minipage}{6in}
  		\centering
  		$\vcenter{\hbox{\includegraphics[width=0.48\textwidth]{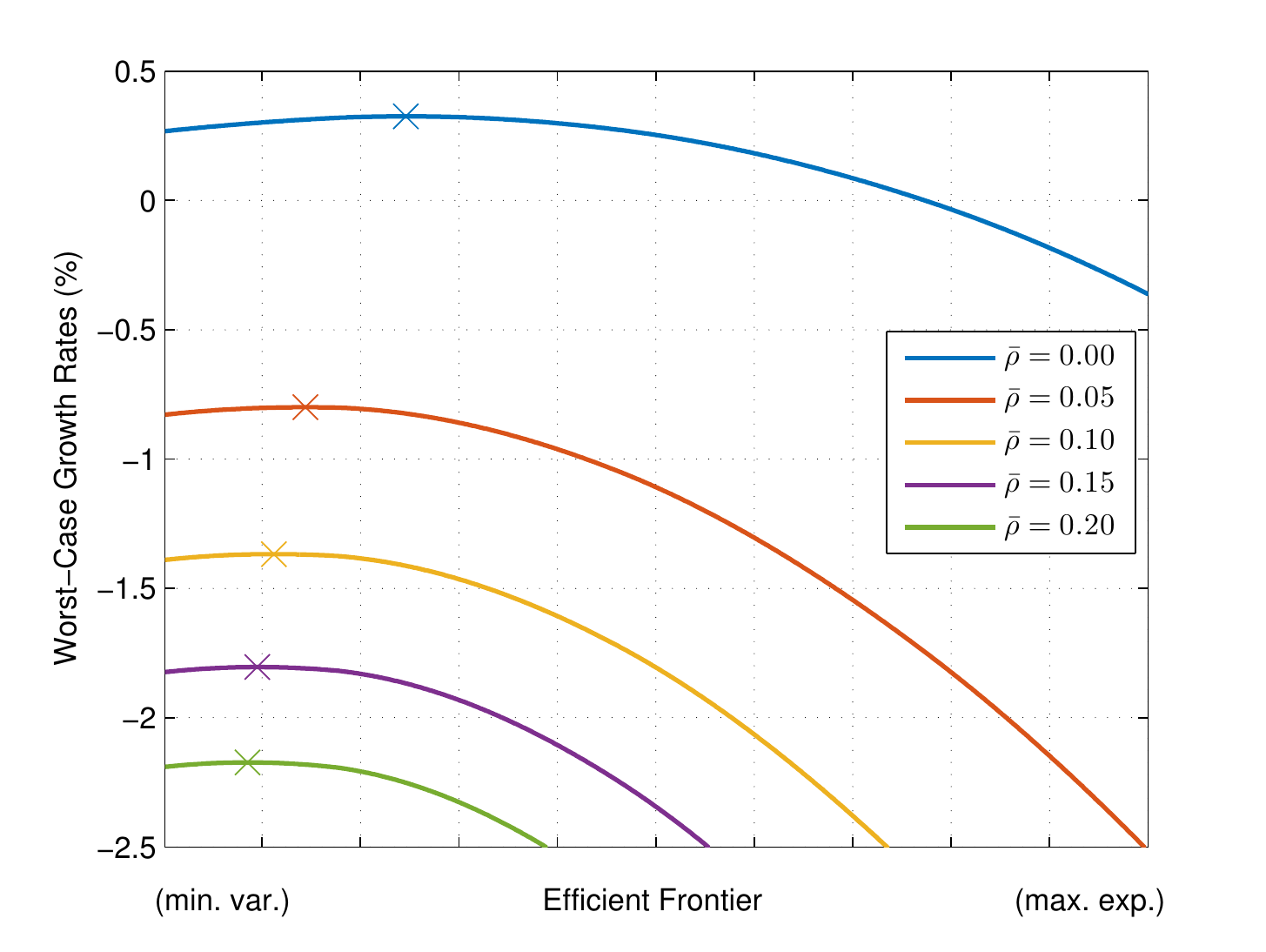}}}$
  		$\vcenter{\hbox{\vspace{-.5mm}\includegraphics[width=0.485\textwidth]{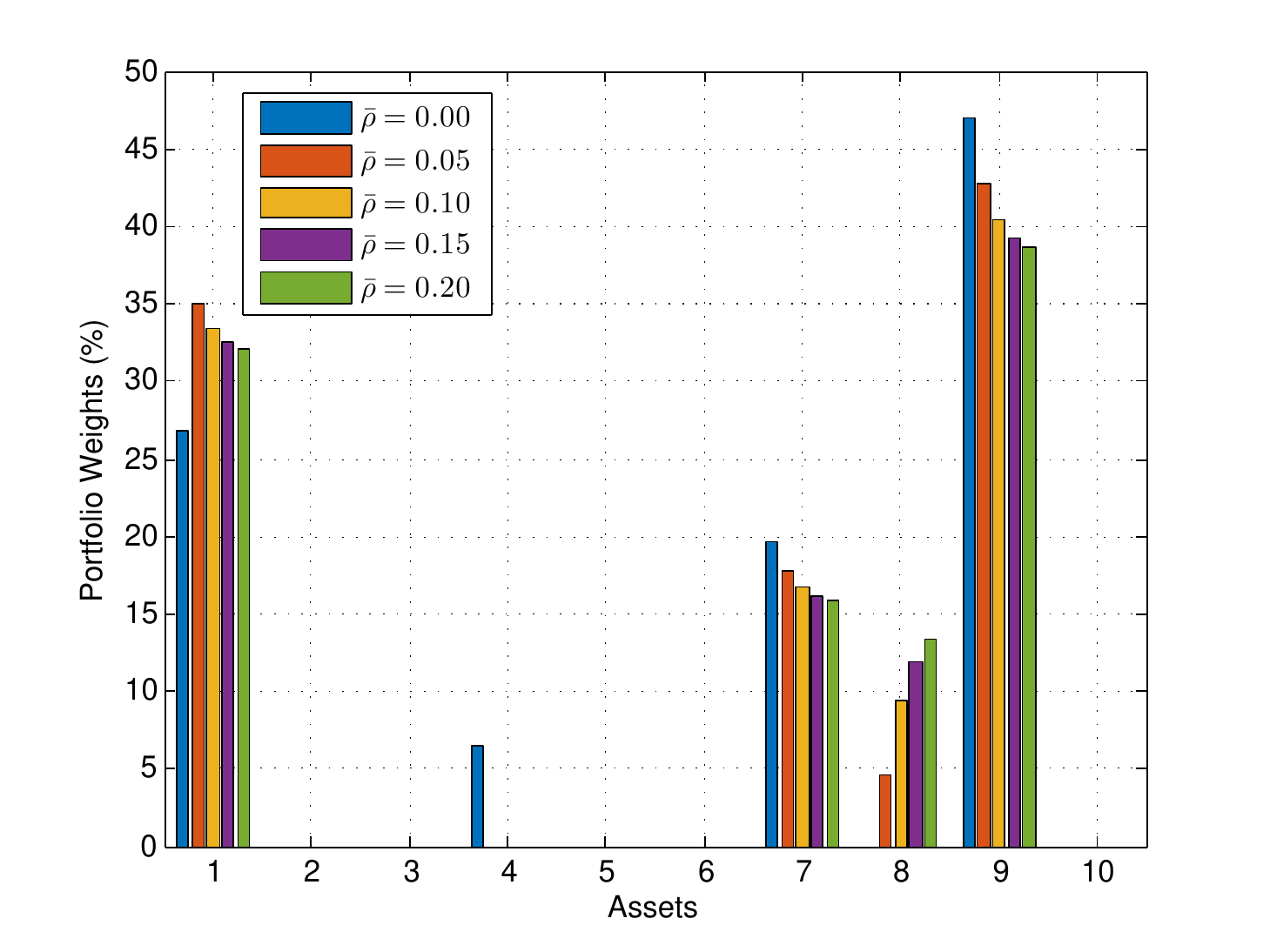}}}$
	\end{minipage}
	\caption{(Left) $\mathcal G_\epsilon (\bm w)$ calculated for every mean-variance efficient portfolio $\bm{w}$ starting from the minimum-variance portfolio on the left to maximum-expectation portfolio on the right. 
    (Right) Portfolio weights of the extended robust growth-optimal portfolios.}
	\label{fig:wcvar_frontier}
\end{figure}

It can be seen that the optimal portfolio in view of Corollary \ref{cor:rgop}, i.e., the extended robust growth-optimal portfolio, is shifted towards the minimum-variance portfolio as $\bar\rho$ increases.\footnote{Unless stated otherwise, in all experiments we assume that $\mathcal{W}$ is the probability simplex in $\mathbb{R}^N$.} Since the minimum-variance portfolio is the most risk-averse portfolio on the efficient frontier, we may say that the risk of the considered market increases with $\bar\rho$, confirming our previous hypothesis. In terms of the wealth distribution, we can see that assets with low variances (especially, assets \#1, \#7, \#8, \#9) are preferred. \\

{\color{black}\subsection{Autocorrelation Exploitation}
We now compare the performance of the two variants of the robust growth-optimal portfolios, namely the extended robust growth-optimal portfolio $\bm{w}^\text{c}$ and the original one $\bm{w}^\text{u}$. Recall that the extended robust growth-optimal portfolio is defined as an optimizer in the view of Corollary~\ref{cor:rgop}, whereas the original robust growth-optimal portfolio is ignorant of the autocorrelations and thus assumes $\bar\rho = 0$. We remark again that $\bm{w}^\text{u}$ was first studied in~\cite{Rujeerapaiboon16}. It immediately follows from our probabilistic setting that there exists a sequence of probability distributions $\left\{ \mathbb{P}^{(k)} \right\} \subset \mathcal{P}$, indexed by $k$, such that $\bm{w}^\text{c}$ outperforms $\bm{w}^\text{u}$ in the sense that 
\begin{equation*}
\begin{aligned}
	&\lim_{k \rightarrow \infty} \max_\gamma \left\lbrace \gamma :
		\mathbb{P}^\text{(k)} \left( \frac{1}{T} \sum_{t \in \mathcal T} \left( (\bm{w}^\text{c})^\intercal \tilde{\bm{r}}_t - \frac{1}{2}\left( (\bm{w}^\text{c})^\intercal \tilde{\bm{r}}_t \right)^2 \right) \geq \gamma \right) \geq 1 - \epsilon \right\rbrace \geq \\
	&\hspace{18mm} \lim_{k \rightarrow \infty} \max_\gamma \left\lbrace \gamma :
		\mathbb{P}^\text{(k)} \left( \frac{1}{T} \sum_{t \in \mathcal T} \left( (\bm{w}^\text{u})^\intercal \tilde{\bm{r}}_t - \frac{1}{2}\left((\bm{w}^\text{u})^\intercal \tilde{\bm{r}}_t \right)^2 \right) \geq \gamma \right) \geq 1 - \epsilon \right\rbrace .
\end{aligned}
\end{equation*}
However, $\mathbb{P}^\text{(k)}$ is a discrete distribution (see e.g.~\cite{Vandenberghe07, Rujeerapaiboon16}) and is therefore considered \emph{unrealistic} in financial markets. Thus, it is more practically relevant to compare $\bm{w}^\text{c}$ and $\bm{w}^\text{u}$ under a more plausible distribution.

In the following, we assume that asset returns $\tilde{\bm{r}} = \left[ \tilde{\bm{r}}^\intercal_1, \hdots , \tilde{\bm{r}}^\intercal_T \right]^\intercal$ follow a multivariate Gaussian distribution $\mathbb{P}^\text{g}$ living on $\mathbb{R}^{NT}$, with period-wise mean vector and period-wise covariance matrix calibrated to those of the 10 Industry Portfolios data set (see Table \ref{tab:10ind}). In order to save computational time, we actually restrict ourselves to the four assets favored by Corollary \ref{cor:rgop}, i.e., assets \#1, \#7, \#8, \#9 (see Figure \ref{fig:wcvar_frontier}). For two different rebalancing periods $s \neq t$, we assume constant autocorrelation $\bar\rho$ for every pair of assets (i.e., $\rho_1 = \hdots = \rho_{T-1} = \bar\rho$). Under this Gaussian assumption, we simulate 10$,$000 realizations of the random returns $\tilde{\bm{r}}$. We then compare the actual growth rates of the portfolios $\bm{w}^\text{c}$ and $\bm{w}^\text{u}$, where the actual growth rate of portfolio $\bm{w}$ is defined as an $\epsilon$--quantile of $\frac{1}{T} \sum_{t \in \mathcal T} \left( \bm{w}^\intercal \tilde{\bm{r}}_t - \frac{1}{2}\left( \bm{w}^\intercal \tilde{\bm{r}}_t \right)^2 \right)$. Figure~\ref{fig:outperformance} below presents the outperformance of the extended robust growth-optimal portfolio over its original counterpart, i.e., 
\begin{equation*}
	\text{outperformance} =  2 \cdot \frac{\left[ \widehat{\mathcal{G}}_\epsilon (\bm{w}^\text{c}) - \widehat{\mathcal{G}}_\epsilon (\bm{w}^\text{u}) \right]}{\left\vert \widehat{\mathcal{G}}_\epsilon (\bm{w}^\text{c}) \right\vert + \left\vert \widehat{\mathcal{G}}_\epsilon (\bm{w}^\text{u}) \right\vert} \quad
	(\widehat{\mathcal{G}}_\epsilon (\bm{w}) =\text{ actual growth rate of $\bm{w}$ calculated under }\mathbb{P}^\text{g})
\end{equation*}
where $\epsilon = 10\%$ and $\bar\rho = -\frac{1}{T}$.\footnote{Note that $\bar\rho$ must be between $-\frac{1}{T-1}$ and 1 for the autocorrelation matrix $\mathbf{P}$ to be positive definite.} In this experiment, we intentionally set $\bar\rho$ to be negative. In doing so, the market is considered less risky; see Section \ref{sec:risk}. Thus, we expect that the portfolio $\bm{w}^\text{u}$, which is ignorant of the autocorrelations, will be unable to compete with the portfolio $\bm{w}^\text{c}$. 
\begin{figure}
	\centering
	\begin{minipage}{6in}
  		\centering
  		$\vcenter{\hbox{\includegraphics[width=0.48\textwidth]{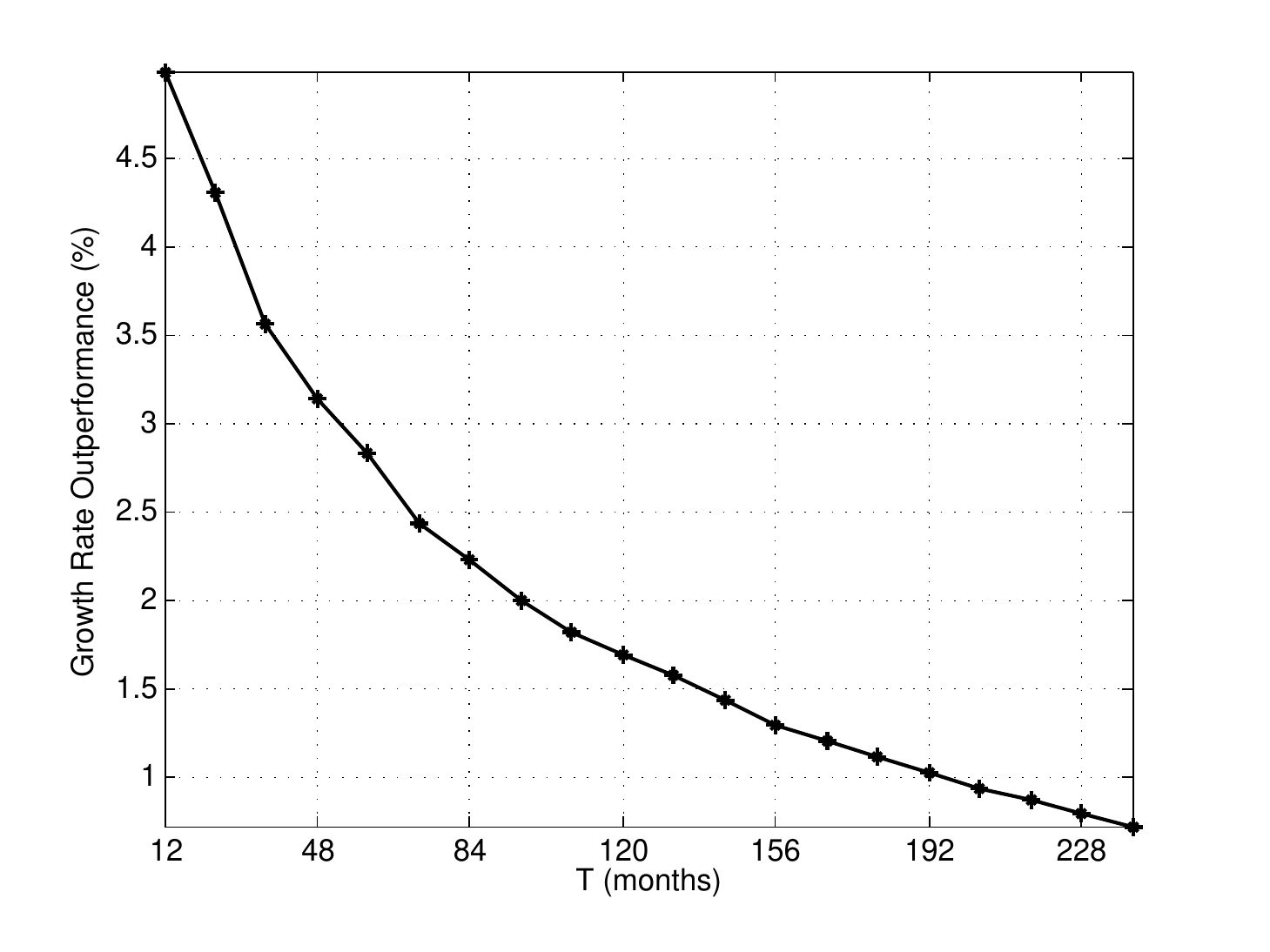}}}$
  		$\vcenter{\hbox{\vspace{-1.5mm}\includegraphics[width=0.493\textwidth]{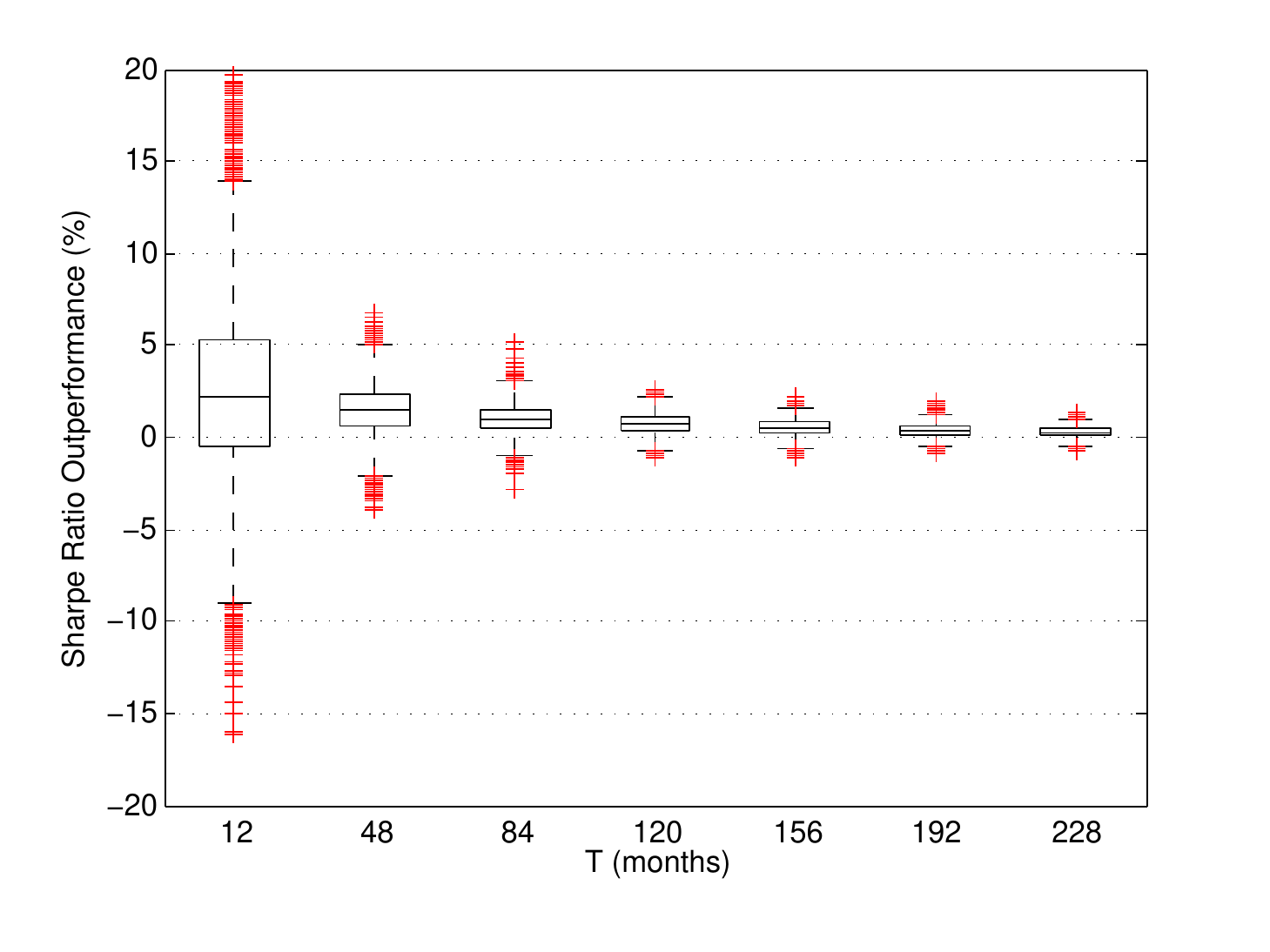}}}$
	\end{minipage}
	\caption{Comparison of the extended and the original robust growth-optimal portfolios under the Gaussian distribution $\mathbb{P}^\text{g}$ in terms of (left) actual growth rates and (right) realized Sharpe ratios.}
	\label{fig:outperformance}
\end{figure}
Indeed, Figure~\ref{fig:outperformance} confirms our intuition that the extended robust growth-optimal portfolio consistently outperforms the original version in terms of the actual growth rates across different investment horizons ranging from 1 year to 20 years. Although, the outperformance decreases with the length of the investment horizon $T$, the original robust growth-optimal portfolio becomes competitive (outperformance $<$ 1\%) only when $T > 192$ months. Under this condition though, we have that $\bar\rho > -\frac{1}{192} \approx -0.5\%$, i.e., the autocorrelations almost vanish. In addition, Figure~\ref{fig:outperformance} also contains a box plot which reports outperformance in terms of the realized Sharpe ratio which is defined as the ratio between the sample mean and the sample standard deviation of the monthly portfolio returns for each realization of $\tilde{\bm{r}}$.}

\section{Extension to a general autocorrelation matrix} \label{sec:extension}
In this section, we make an observation that the result discussed in Section \ref{sec:wcgrowthrate} can be used as a good approximation to the case with a general autocorrelation matrix. Recall that the autocorrelation matrix $\mathbf P$ defined in \eqref{eq:corrmat} is composed of at most $\lceil (T-1)/2 \rceil$ different correlation terms due to the assumption that $\rho_t = \rho_{T-t}$. Indeed, one might argue that this assumption is restrictive. If we relax this assumption, then the autocorrelations (of a weak-sense stochastic process $\tilde{\bm{\eta}}$) are expressed as a matrix of the form 
\begin{equation*}
\begin{aligned}
	\mathbf{P} = \left[ \begin{array}{cccc}
		\rho_0 & \rho_1 & \hdots & \rho_{T-1} \\
		\rho_1 & \rho_0 & \hdots & \rho_{T-2} \\
		\vdots & \vdots & \ddots & \vdots \\
		\rho_{T-1} & \rho_{T-2} & \hdots & \rho_0
	\end{array} \right]
\end{aligned}
\end{equation*}
because the correlation between $\tilde{\eta}_s$ and $\tilde{\eta}_t$ depends solely on the difference between $s$ and $t$; see e.g. \cite{Lindgren12}. In this case, $\mathbf{P}$ ceases to be circulant.

\begin{rem}
\label{rem:approx}
For a general autocorrelation matrix, we define $\bar\rho$ as $\frac{2}{T(T-1)}\sum_{t=1}^{T-1} (T-t)\rho_t$. Then, we can apply Theorem \ref{thm:wcvar} to approximate $\mathcal{G}_\epsilon (\bm{w})$ by using $\bar\rho$ as an input to the model, and we denote the approximation by $\mathcal{G}'_\epsilon (\bm{w})$. We remark that, in view of Theorem \ref{thm:wcvar}, the approximate worst-case growth rate $\mathcal{G}'_\epsilon (\bm{w})$ is obtained under an autocorrelation matrix $\mathbf{P}'$ with $\rho_t = \bar\rho$ for all $t = 1, \hdots, T-1$.
\end{rem}

We can then compare the approximate worst-case growth rate $\mathcal{G}'_\epsilon (\bm{w})$ with its exact value $\mathcal{G}_\epsilon (\bm{w})$ which is obtainable by solving the semidefinite program~\eqref{opt:projected_sdp}. The approximation errors defined by
\begin{equation*}
	\text{error} =  2 \cdot \frac{\left[ \mathcal{G}'_\epsilon (\bm{w}) - \mathcal{G}_\epsilon (\bm{w}) \right]}{\left\vert \mathcal{G}'_\epsilon (\bm{w}) \right\vert + \left\vert \mathcal{G}_\epsilon (\bm{w}) \right\vert}
\end{equation*}
are presented in Figure \ref{fig:approx} below. In this experiment, we assume that the portfolio return and standard deviation are 15\% and 20\%, respectively, and we set $\epsilon$ equal to 15\%. Note that these parameters are in line with their respective typical annual ranges (see  \cite{Luenberger98}). Furthermore, we randomly generate the correlations $\rho_t$ from a uniform distribution on the interval~$[0,0.2]$. For each value of $T$, we repeat the experiment for 20 times to get a meaningful approximation error.
\begin{figure}[!ht]
    \centering
    \includegraphics[width=1.00\textwidth]{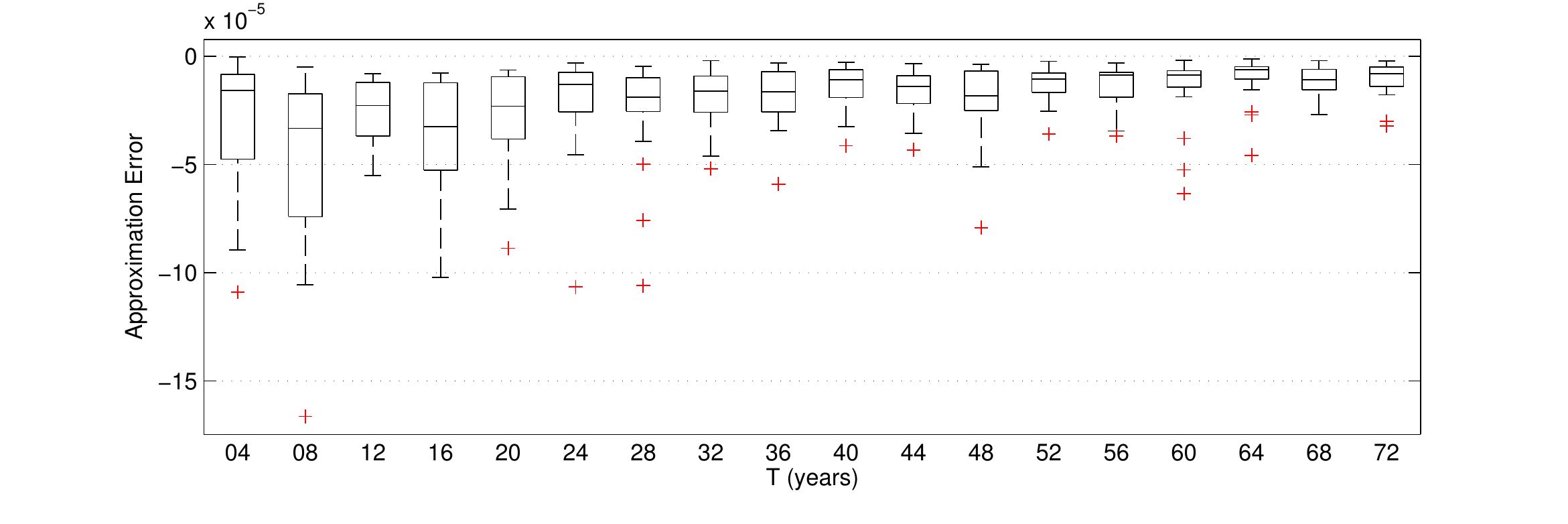}
	\caption{Approximation errors for $T = 4, 8, \hdots, 72$ (shown are 10\%, 25\%, 50\%, 75\%, 90\% quantiles and outliers).}
	\label{fig:approx}
\end{figure}
From Figure \ref{fig:approx}, it can be observed that the approximation errors are very small. Furthermore, interestingly, the errors are all negative, i.e., $\mathcal{G}_\epsilon (\bm{w}) \geq \mathcal{G}'_\epsilon (\bm{w})$. We ascertain this observation with Proposition \ref{prop:approx} below.
\begin{prop}
\label{prop:approx}
Let $\bar\rho$ denote $\frac{2}{T(T-1)}\sum_{t=1}^{T-1} (T-t)\rho_t$. 
If $1 - \bm{w}^\intercal\bm\mu > \sqrt{\frac{(1 + (T-1)\bar\rho)\epsilon}{(1-\epsilon) T}} \Vert \bm\Sigma^{1/2} \bm w \Vert$, then $\mathcal{G}_\epsilon (\bm{w}) \geq \mathcal{G}'_\epsilon (\bm{w})$.
\end{prop}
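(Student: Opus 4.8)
The plan is to exhibit the closed-form quantity $\mathcal{G}'_\epsilon(\bm w)$ as the objective value of a \emph{feasible} (not necessarily optimal) point of the semidefinite program~\eqref{opt:projected_sdp} written with the \emph{true}, Toeplitz second-order moment matrix $\bm\Omega(\bm w)$; since $\mathcal{G}_\epsilon(\bm w)$ is the maximum over all such feasible points, the inequality $\mathcal{G}_\epsilon(\bm w) \ge \mathcal{G}'_\epsilon(\bm w)$ follows immediately. Let $\bm\Omega'(\bm w)$ be the moment matrix built from the circulant autocorrelation matrix $\mathbf{P}'$ with $\rho_t = \bar\rho$ for every $t = 1,\dots,T-1$. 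By Remark~\ref{rem:approx} together with Theorem~\ref{thm:wcvar}, $\mathcal{G}'_\epsilon(\bm w)$ equals the optimal value of~\eqref{opt:projected_sdp} with $\bm\Omega'(\bm w)$ in place of $\bm\Omega(\bm w)$; and since $\mathbf{P}'$ is circulant, the entire reduction performed between Lemma~\ref{lem:symmetry} and Theorem~\ref{thm:wcvar} applies verbatim, so this substituted problem admits an optimal solution $(\mathbf{M}',\beta',\gamma')$ with $\gamma' = \mathcal{G}'_\epsilon(\bm w)$ in which $\mathbf{M}'$ is compound symmetric; in particular its leading $T \times T$ block is $\mathbf{C}' = a\,\mathbb{I} + b\,\bm 1 \bm 1^\intercal$ for some scalars $a,b$ (with $a = \tfrac12$).

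The next step is to observe that, among the constraints of~\eqref{opt:projected_sdp}, only $\beta + \tfrac1\epsilon \langle \bm\Omega(\bm w), \mathbf{M}\rangle \le 0$ involves the moment matrix; the two semidefinite constraints do not. Hence $(\mathbf{M}',\beta',\gamma')$ is feasible for~\eqref{opt:projected_sdp} with $\bm\Omega(\bm w)$ as soon as $\langle \bm\Omega(\bm w), \mathbf{M}'\rangle \le \langle \bm\Omega'(\bm w), \mathbf{M}'\rangle$, and I will show these two inner products in fact coincide. Since $\bm\Omega(\bm w)$ and $\bm\Omega'(\bm w)$ agree outside their leading blocks and differ there by exactly $\bm w^\intercal \bm\Sigma \bm w \,(\mathbf{P} - \mathbf{P}')$, it suffices to prove $\langle \mathbf{P} - \mathbf{P}', \mathbf{C}'\rangle = 0$. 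Writing $\mathbf{C}' = a\,\mathbb{I} + b\,\bm 1 \bm 1^\intercal$, this inner product equals $a\,\mathrm{tr}(\mathbf{P} - \mathbf{P}') + b\,\bm 1^\intercal (\mathbf{P} - \mathbf{P}')\bm 1$. Both $\mathbf{P}$ and $\mathbf{P}'$ have all diagonal entries equal to $\rho_0 = 1$, so the trace term vanishes; and counting the multiplicity of each $\rho_t$ in a symmetric Toeplitz matrix yields $\bm 1^\intercal \mathbf{P} \bm 1 = T + 2\sum_{t=1}^{T-1}(T-t)\rho_t$, while $\bm 1^\intercal \mathbf{P}' \bm 1 = T + T(T-1)\bar\rho$, so the very definition $\bar\rho = \tfrac{2}{T(T-1)}\sum_{t=1}^{T-1}(T-t)\rho_t$ forces $\bm 1^\intercal(\mathbf{P}-\mathbf{P}')\bm 1 = 0$. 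Thus $\langle \mathbf{P}-\mathbf{P}', \mathbf{C}'\rangle = 0$, hence $\langle \bm\Omega(\bm w),\mathbf{M}'\rangle = \langle \bm\Omega'(\bm w),\mathbf{M}'\rangle$, so $(\mathbf{M}',\beta',\gamma')$ is feasible for~\eqref{opt:projected_sdp} with the true $\bm\Omega(\bm w)$ and attains objective $\gamma' = \mathcal{G}'_\epsilon(\bm w)$, giving $\mathcal{G}_\epsilon(\bm w) \ge \mathcal{G}'_\epsilon(\bm w)$. The hypothesis $1 - \bm w^\intercal \bm\mu > \sqrt{(1+(T-1)\bar\rho)\epsilon/((1-\epsilon)T)}\,\Vert\bm\Sigma^{1/2}\bm w\Vert$ is used only to guarantee, through Theorem~\ref{thm:wcvar}, that $\mathcal{G}'_\epsilon(\bm w)$ is given by the stated closed form.

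The conceptual crux — short once set up correctly — is recognizing that the triangular weighting $(T-t)$ in the definition of $\bar\rho$ is precisely the one that preserves $\bm 1^\intercal \mathbf{P}\bm 1$ (and, trivially, $\mathrm{tr}\,\mathbf{P}$) when replacing the general Toeplitz $\mathbf{P}$ by its circulant surrogate $\mathbf{P}'$, together with the structural fact that the relevant optimizer $\mathbf{M}'$ of the circulant problem can be taken compound symmetric, so that \emph{only} those two functionals of $\mathbf{P}-\mathbf{P}'$ enter the moment constraint. The one remaining point to verify is that $\mathcal{G}_\epsilon(\bm w)$ is still well defined (i.e.,~\eqref{opt:projected_sdp} is a valid reformulation) for a non-circulant $\mathbf{P}$; this holds because the reformulation of~\cite{Zymler10} requires only $\bm\Omega(\bm w) \succeq \bm 0$, which is ensured here by the standing assumptions $\mathbf{P} \succ \bm 0$ and $\bm\Sigma \succ \bm 0$.
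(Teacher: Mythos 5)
Your proposal is correct and follows essentially the same route as the paper: take the compound-symmetric optimizer of the circulant surrogate problem, observe that only the constraint $\beta + \tfrac{1}{\epsilon}\langle \bm\Omega(\bm w), \mathbf{M}\rangle \le 0$ involves the moment matrix, and show $\langle \mathbf{P}-\mathbf{P}', m\bm{11}^\intercal + \tfrac{1}{2}\mathbb{I}\rangle = 0$ via the matching diagonals and the identity $\bm 1^\intercal \mathbf{P}\bm 1 = \bm 1^\intercal \mathbf{P}'\bm 1$ forced by the definition of $\bar\rho$. Your added remarks on the validity of the reformulation for non-circulant $\mathbf{P}$ and on where the hypothesis of the proposition enters are harmless elaborations of the same argument.
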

\begin{proof}
	For the proposition to hold, it suffices to prove that an optimal solution $( \mathbf{M}, \beta, \gamma)$ in the view of $\mathcal{G}'_\epsilon (\bm{w})$ is feasible in the problem determining $\mathcal{G}_\epsilon (\bm{w})$; see \eqref{opt:projected_sdp}. From the intermediate results in Section \ref{sec:wcgrowthrate}, we can assume without loss of generality that $\mathbf{M}$ is a matrix of the following form
\begin{equation*}
	\mathbf{M} = \left[ \begin{array}{cc}
		m\bm{11}^\intercal + \frac{1}{2}\mathbb{I} & m_T \bm{1} \\
		m_T \bm{1}^\intercal & m_{T+1}
	\end{array} \right] \succeq \bm{0}.
\end{equation*}
The statement of the proposition immediately follows if one can show that $\left< \mathbf{M}, \bm{\Omega}'(\bm w) - \bm{\Omega}(\bm w)\right> = 0$, where $\bm\Omega(\bm w)$ is a second-order moment matrix corresponding to the \emph{general} autocorrelation matrix $\mathbf{P}$ and $\bm\Omega'(\bm w)$ is that corresponding to the approximate autocorrelation matrix $\mathbf{P}' = (1-\bar\rho)\mathbb{I} + \bar\rho\bm{11}^\intercal$ (see Remark \ref{rem:approx}). Note that since mean and variance are stationary, 
\begin{equation*}
	 \left< \mathbf{M}, \bm{\Omega}'(\bm w) - \bm{\Omega}(\bm w)\right> = 0
	 \quad\Longleftrightarrow\quad
	 \left< m\bm{11}^\intercal + \frac{1}{2}\mathbb{I}, \mathbf{P} - \mathbf{P}' \right> = 0
	 \quad\Longleftarrow\quad
	 \left< \bm{11}^\intercal, \mathbf{P} - \mathbf{P}' \right> = 0,
\end{equation*}
where the relation on the right follows from that $\mathbf{P}$ and $\mathbf{P}'$ share the same main diagonal $\bm{1}$. The claim thus follows from the definition of $\bar\rho$ which ensures that the sum of all elements in $\mathbf{P}$ is equal to the sum of all elements in $\mathbf{P}'$ as
$\bm{1}^\intercal \mathbf{P} \bm{1} = T + 2\sum_{t=1}^{T-1} (T-t)\rho_t$ and 
$\bm{1}^\intercal \mathbf{P'} \bm{1} = T + T(T-1)\bar\rho$.
\end{proof}

{\color{black} \section{Conclusions} \label{sec:conclusion}
Inspired by \cite{Rujeerapaiboon16}, we extend the robust growth-optimal portfolio, which rigorously provides a performance guarantee tailored to an investor's investment horizon, to account for market autocorrelations. In particular, we show that if the autocorrelation matrix possesses a circulant structure, then calculating the extended robust growth-optimal portfolio is as easy as calculating its original counterpart, which assumes market uncorrelatedness. For non-circulant autocorrelation matrices, a close approximation is given. Our analysis and numerical experiments suggest that, when the aggregate autocorrelation is positive, assets with lower variances are favorable. On the other hand, accounting for negative autocorrelation may increase an investor's profitability. Last but not least, we argue that market autocorrelations can be absorbed by modifying the covariance matrix of the asset return distribution.
}

\bibliography{bibliography}

\end{document}